\documentclass[conference]{IEEEtran}
\usepackage{amsmath,amssymb}
\usepackage{booktabs}
\usepackage{pgfplots}
\usepackage{tikz}
\usepackage{hyperref}
\usepackage{graphicx}
\usepackage{cleveref}
\usepackage{balance}
\usepackage{url}
\usepackage{amsthm}
\usepackage[ruled]{algorithm2e}
\pgfplotsset{compat=1.18}

\newtheorem{lemma}{Lemma}
\renewenvironment{proof}{{\bfseries Proof.}}{\hfill$\square$}

\title{A Multi-Agent Consensus Protocol for Stable Software Remodularization}

\author{\IEEEauthorblockN{Ahmed F. Ibrahim\\
\IEEEauthorblockA{Western University\\
London, Canada\\
aibrah64@uwo.ca}}
}

\DeclareUnicodeCharacter{202F}{\,}

\begin{document}
\maketitle

\begin{abstract}
Automatic software remodularisation is typically cast as a single-objective optimization problem. 
While recent metaheuristics have improved search efficiency, real-world architecture recovery must reconcile the conflicting attributes of structural cohesion and evolutionary stability. 
We reframe software module clustering as a \emph{distributed consensus problem} among autonomous agents. 
We introduce an Asymmetric Monotonic Concession Protocol (AMCP) that enables agents to negotiate decompositions that respect multi-attribute utility thresholds. 
We formally prove the protocol's termination, its bounded concession behaviour consistent with the Zeuthen Strategy under closed‑instance conditions, and the local Pareto‑satisfactoriness of the resulting partitions. 
Preliminary experiments on a synthetic benchmark and the Xwork Java framework confirm that our negotiated consensus matches state‑of‑the‑art optimisers when stability budgets are loose, while acting as a ``circuit breaker'' to enforce strict stability constraints.
Extended results on ten further systems, including comparisons with multi‑objective evolutionary algorithms and multi‑version chains, will be reported in a forthcoming full paper.
\end{abstract}

\begin{IEEEkeywords}
Software modularization, multi-agent negotiation, consensus clustering, TurboMQ, MoJo, software maintenance.
\end{IEEEkeywords}

\section{Introduction}
\label{sec:intro}

The way a software system is designed fundamentally shapes its maintainability and the resources required for its evolution. As systems grow and undergo continuous change, their originally neat module structures degrade, a phenomenon often termed architectural erosion. To address this, software clustering algorithms have been used to recover or improve modular organization by grouping source-code entities into cohesive, loosely coupled clusters~\cite{Mancoridis1998,Mitchell2006}.

Traditional clustering tools, such as Bunch~\cite{Mitchell2006}, treat remodularisation as a single-objective optimization problem: they search for a partition that maximizes a metric like TurboMQ, which rewards intra-module connections and penalizes inter-module coupling. While such an objective captures structural quality, it neglects a crucial pragmatic concern: evolutionary stability. A decomposition that changes dramatically from one release to the next, even if slightly more cohesive, imposes a heavy cognitive burden on developers who must re-learn the system layout. Thus, cohesion and stability often pull in opposite directions, yet existing optimizers are ``budget-blind''. They either ignore stability entirely or merge it into a fixed weighted sum with arbitrary coefficients.

In earlier work, we took a first step towards reconciling multiple clustering criteria with CC/G~\cite{Ibrahim2014,Ibrahim2014CCECE}, a centralized ensemble that fuses several clustering outputs into a single static consensus. Although CC/G improved over individual algorithms, it lacks the flexibility to adapt to differing stakeholder preferences or to enforce specific constraints on the accepted solution. The present work makes a qualitative leap forward: we replace the one-shot fusion with a dynamic, multi-agent negotiation.

We propose to view software modularisation not as an optimization problem but as a \emph{distributed consensus problem} among autonomous agents, each advocating a different quality attribute. A Cohesion Agent strives for maximal structural density, while a Stability Agent seeks to preserve the layout of the previous version. Through an Asymmetric Monotonic Concession Protocol (AMCP), these agents exchange concessions and gradually converge on a decomposition that respects explicit \emph{survival thresholds} set by the architect. By adjusting the stability threshold $\tau_{sta}$, the architect can mathematically enforce a “stability budget” that the system cannot exceed.

The main contributions of this work are twofold. First, we introduce a formal mathematical model of software remodularisation as a distributed consensus problem among autonomous utility-driven agents, and we propose the Asymmetric Monotonic Concession Protocol (AMCP) with formal proofs of its termination, its mapping to the Zeuthen Strategy, and its local Pareto-satisfactoriness (Lemmas 1--3). Second, we provide preliminary experimental evidence on a synthetic benchmark and on the real‑world Xwork framework, demonstrating the protocol's core budget‑enforcement behavior. A more extensive empirical evaluation, covering ten additional systems and comparisons with multi‑objective evolutionary baselines, is deferred to a forthcoming full paper.

The remainder of the paper is organized as follows. Section~\ref{sec:background} reviews related work. Section~\ref{sec:math} presents the mathematical framework and the AMCP protocol. Section~\ref{sec:results} presents the experimental setup and analysis on the Xwork dataset. Section~\ref{sec:preliminary} summarises preliminary findings on additional systems and outlines future work. Section~\ref{sec:conclusion} concludes the paper.

\section{Background and Related Work}
\label{sec:background}

\subsection{Software Module Clustering}
Software remodularisation has been an active research area for more than two decades~\cite{Mancoridis1998,Mitchell2006}. Bunch pioneered the field by formulating modularisation as a graph‑partitioning problem and using genetic algorithms to maximize a single‑value fitness function, TurboMQ~\cite{Mitchell2006,Shokoufandeh2005}. Although Bunch finds good solutions, its approach remains single‑objective because cohesion and coupling are collapsed into one score~\cite{Mitchell2007,Pourasghar2024}.

Later work introduced more advanced search strategies and better scalability, but largely retained Bunch’s reliance on scalar fitness optimization. Techniques such as Sand Cat Swarm Optimization (SCSO), differential evolution, and chaos‑based heuristics have been used for Software Architecture Recovery (SAR), often improving on the earlier GA‑based results~\cite{Arasteh2023,Gupta2023,Kiani2023}. The Fast Clustering Algorithm (FCA) tackled scalability by working directly on the dependency matrix, enabling it to handle systems as large as Chromium~\cite{Teymourian2022}. Regardless of their improved search efficiency, these methods, including recent control‑flow‑aware GA variants~\cite{Pourasghar2024}, still rely on scalar fitness functions. Consequently, they cannot reason about explicit design trade‑offs or enforce hard stability budgets.

\subsection{Ensemble and Cooperative Clustering}
A different strand of work—ensemble clustering—attempts to address the limitation of single‑objective methods by aggregating the outputs of multiple algorithms, yet it too offers only a static compromise rather than a principled mechanism for managing trade‑offs. Ensemble clustering techniques, such as the author’s earlier CC/G~\cite{Ibrahim2014}, attempt to combine multiple perspectives by fusing the outputs of several algorithms. CC/G operates in two phases: first, an agreement vector identifies sub‑clusters where the input clusterings concur; then a merging heuristic fuses those sub‑clusters into a final decomposition. Although this approach yields improved results, it represents a static, one‑shot compromise that cannot react to different architectural priorities or evolve incrementally as the system changes.

\subsection{Multi‑Agent Negotiation at the Architecture Level}
The use of autonomous agents to negotiate quality‑attribute trade‑offs has shown promise at the high‑level architectural model level. Frameworks like SQuAT and DesignBots employ specialized agents (e.g., performance or modifiability ``dbots'') that perform local optimizations before engaging in utility‑based negotiation to find balanced alternatives~\cite{Pace2023,Monteserin2017,Rago2017}. However, these systems manipulate models like Palladio (PCM), not source‑code dependency graphs. Although recent work has applied multi‑agent reasoning to dependency management~\cite{Alhanahnah2024,Li2025}, the AMCP is the first to transpose the negotiation paradigm directly to the code level. By generalizing the concepts of cooperative grouping~\cite{Ibrahim2022} into a decentralized protocol, we provide the first SAR framework that focuses on remodularization as a negotiation between quality‑driven agents with formal guarantees and practical stability budgets.

\section{Mathematical Framework}
\label{sec:math}

\subsection{Software Graph and Partitions}
A software system is modeled as a directed dependency graph $G = (V,E)$, where $V$ is the set of modules (classes) and $E$ the set of dependency edges. We assume two versions are available: the current dependency matrix $\mathbf{DSM}_{curr}$ (an $n\times n$ adjacency matrix) and the accepted decomposition of the previous version $D_{prev}$, which is a flat partition of a common set of modules $V_{common} \subseteq V$ into $k_{prev}$ clusters.

A partition $D$ assigns each module $m\in V$ to a cluster label $c\in\{1,\dots,k\}$.

\subsection{Agent Utilities}
Two agents are defined, each with a utility function that maps partitions to the interval $[0,1]$. These utilities capture the conflicting objectives of cohesion and stability.

\begin{itemize}
\item \textbf{Cohesion Agent} ($A_{coh}$):
  \begin{equation}
  U_{coh}(D) = \frac{\mathrm{TurboMQ}(D)}{k},
  \label{eq:ucoh}
  \end{equation}
  where $k$ is the number of clusters in $D$ and
  \[
  \mathrm{TurboMQ}(D) = \sum_{i=1}^{k} \frac{\mu_i}{\mu_i + \tfrac12\sum_{j\neq i}(\epsilon_{ij}+\epsilon_{ji})},
  \]
  with $\mu_i$ the total intra‑cluster edges in cluster $i$ and $\epsilon_{ij}$ the edges from cluster $i$ to cluster $j$. Division by $k$ normalizes the value, so $U_{coh}=1$ only when every cluster is perfectly cohesive.

\item \textbf{Stability Agent} ($A_{sta}$):
  \begin{equation}
  U_{sta}(D) = 1 - \frac{\mathrm{MoJo}(D, D_{prev})}{n_{common}},
  \label{eq:usta}
  \end{equation}
  where $n_{common}=|V_{common}|$ and $\mathrm{MoJo}(D, D_{prev})$ denotes the true Move-plus-Join distance as defined by Tzerpos and Holt~\cite{Tzerpos1999}. In our experiments, we used the original MoJo.jar for the exact computation.
\end{itemize}

\subsection{Asymmetric Monotonic Concession Protocol (AMCP)}
We adopt an asymmetric design in which the Stability Agent acts as the sole proposer. This choice reflects the real‑world maintenance scenario: the legacy architecture ($D_{prev}$) is the default state, and any modification must be argued for and justified against that baseline. The Cohesion Agent implicitly accepts any move that strictly improves its utility without violating the stability floor.

Algorithm~\ref{alg:amcp} gives the pseudocode of the AMCP.

\begin{algorithm}[t]
\caption{Asymmetric Monotonic Concession Protocol (AMCP)}
\label{alg:amcp}
\SetKwInOut{Input}{Input}
\SetKwInOut{Output}{Output}
\Input{DSM $\mathbf{M}$, previous decomposition $D_{prev}$, thresholds $\tau_{sta},\tau_{coh}$}
\Output{Negotiated partition $D^*$}
$D \leftarrow D_{prev}$\;
$u_{coh} \leftarrow U_{coh}(D)$, $u_{sta} \leftarrow U_{sta}(D)$\;
\While{$u_{coh} < \tau_{coh}$}{
  $\mathcal{M} \leftarrow$ all single‑module reassignments from $D$\;
  $\mathcal{V} \leftarrow \emptyset$\;
  \ForEach{$\delta \in \mathcal{M}$}{
    $D' \leftarrow D + \delta$\;
    $u'_{coh} \leftarrow U_{coh}(D')$\;
    $u'_{sta} \leftarrow U_{sta}(D')$\;
    \If{$u'_{coh} > u_{coh}$ \textbf{and} $u'_{sta} \ge \tau_{sta}$}{
      $\mathit{ratio} \leftarrow (u_{sta} - u'_{sta}) / (u'_{coh} - u_{coh})$\;
      $\mathcal{V} \leftarrow \mathcal{V} \cup \{(\delta, \mathit{ratio}, D', u'_{coh}, u'_{sta})\}$\;
    }
  }
  \If{$\mathcal{V} = \emptyset$}{\textbf{break}\;}
  Choose $(\delta^*, r^*, D^*, u_{coh}^*, u_{sta}^*) \in \mathcal{V}$ with smallest $r^*$\;
  $D \leftarrow D^*$, $u_{coh} \leftarrow u_{coh}^*$, $u_{sta} \leftarrow u_{sta}^*$\;
}
\Return $D$\;
\end{algorithm}

In each round, the Stability Agent selects the valid move that minimizes the \textbf{Concession Ratio}:
\begin{equation}
R(\delta) = \frac{U_{sta}(D) - U_{sta}(D+\delta)}{U_{coh}(D+\delta)-U_{coh}(D)}.
\label{eq:ratio}
\end{equation}
If no valid move exists, a state of ``deadlock'' where no single reassignment satisfies both the stability budget and the cohesion improvement requirement, the negotiation terminates. In highly constrained environments (e.g., strict $\tau_{sta}$), this deadlock is not a failure, but a designed feature. The AMCP gracefully stops, guaranteeing that the architectural baseline is never degraded merely for the sake of continuing the search.

\subsection{Complexity and Search Space Pruning}
At each step of the negotiation, the protocol evaluates at most $n \cdot (k_{max}-1)$ candidate reassignments. This focused exploration contrasts with population‑based metaheuristics (e.g., Genetic Algorithms), which must navigate an exponential state space of $O(k^n)$ possible partitions. Although the total length of the negotiation path is finite (Lemma~1), the per‑step cost is $O(n \cdot k)$ utility evaluations, polynomial in both the number of modules and clusters. This enables the AMCP to prune the search space effectively, navigating only the valid‑move frontier where stability constraints are respected. Empirically, a Python implementation of the protocol executed on a standard desktop averages $1.24$ seconds per step for the 113-module Xwork system, and $1.02$ seconds per step for a 120-module subset of Apache Ant, confirming the tractability of the approach.

\subsection{Why Minimising the Concession Ratio is Rational: Zeuthen Strategy}
In the well‑known Zeuthen strategy for bilateral bargaining, each agent assesses the risk of conflict and concedes only when the cost of conceding is less than the cost of breaking off negotiations. The concession is chosen to be the smallest possible that will satisfy the opponent. Our AMCP embodies the same principle. The Stability Agent asks: ``How much stability am I willing to give up to gain one unit of cohesion?'' By picking the move with the smallest $R(\delta)$, the agent sacrifices the minimal amount of its own utility for the maximal gain of the opponent. This ensures a \emph{monotonic concession path} where the agent only makes larger sacrifices when no cheaper concession remains, exactly the behavior predicted by the Zeuthen strategy.

We illustrate with a tiny numeric example. 
Suppose the current state is $D$ with $U_{sta}=1.0$, $U_{coh}=0.5$, and the Stability Agent evaluates two possible moves:
\begin{itemize}
\item Move $\delta_1$: $U_{sta}$ drops to $0.98$, $U_{coh}$ rises to $0.52$. Then $R(\delta_1) = (1.0-0.98)/(0.52-0.50)=1.0$.
\item Move $\delta_2$: $U_{sta}$ drops to $0.90$, $U_{coh}$ rises to $0.55$. Then $R(\delta_2) = (1.0-0.90)/(0.55-0.50)=2.0$.
\end{itemize}
Minimizing $R$ selects $\delta_1$: it gives the same cohesion improvement for a much smaller stability loss. This mirrors the Zeuthen concession logic, confirming rational behavior.

\subsection{A Worked Example on a 3‑Module Graph}
Consider a tiny system with three modules $A$, $B$, $C$. 
The dependency matrices are:
\[
\mathbf{DSM}_{v1.0} = \begin{pmatrix} 0 & 1 & 0 \\ 1 & 0 & 0 \\ 1 & 0 & 0 \end{pmatrix}, \quad
\mathbf{DSM}_{v1.1} = \begin{pmatrix} 0 & 1 & 0 \\ 1 & 0 & 0 \\ 1 & 0 & 0 \end{pmatrix}
\]
Both versions are identical (no evolution). The previous decomposition $D_{prev} = \{A:1, B:1, C:2\}$ places $A$ and $B$ together and $C$ separately. We set $\tau_{sta}=0.7$, $\tau_{coh}=0.9$.

First, compute the initial utilities. For $D_{prev}$, cluster 1 ($\{A,B\}$) has $\mu_1=2$ (edges $A\leftrightarrow B$), inter‑cluster edges: from cluster 1 to 2: $A\to C=1$; from cluster 2 to 1: $C\to A=1$, so $\epsilon_{12}=1$, $\epsilon_{21}=1$. Thus $\mathrm{TurboMQ} = \frac{2}{2+0.5(1+1)} = \frac{2}{3} \approx 0.667$. With $k=2$, $U_{coh}=0.667/2=0.333$. $U_{sta}=1$ because $D=D_{prev}$.

Now, the Stability Agent enumerates single‑module moves. Since modules are 3 and clusters are 2, the possibilities are as follows:
\begin{itemize}
\item Move $A$ from cluster 1 to 2.
\item Move $B$ from cluster 1 to 2.
\item Move $C$ from cluster 2 to 1.
\end{itemize}

\noindent\textbf{Evaluate move $A\to 2$:} 
New partition $D' = \{A:2, B:1, C:2\}$. Cluster 1: $\{B\}$ has $\mu_1=0$. Cluster 2: $\{A,C\}$, $\mu_2 = A\to C=1 + C\to A=1 = 2$. Inter‑cluster edges: $A\leftrightarrow B$ (both ways) =2. $\mathrm{TurboMQ} = 0 + \frac{2}{2+0.5(2+0)} = \frac{2}{3}=0.667$, $k=2$, so $U_{coh}=0.333$ unchanged — move invalid.

\noindent\textbf{Evaluate move $B\to 2$:} By symmetry, same result, invalid.

\noindent\textbf{Evaluate move $C\to 1$:} 
$D' = \{A:1, B:1, C:1\}$. Cluster 1: $\{A,B,C\}$, intra‑edges: $A\leftrightarrow B=2$, $A\leftrightarrow C=2$, total $\mu_1=4$. No inter‑cluster edges. $\mathrm{TurboMQ}=1$, $k=1$, $U_{coh}=1$. $U_{sta}$: MoJo distance between $D'$ and $D_{prev}$: $C$ was in cluster 2, now in 1, so one join operation, MoJo=1, $n_{common}=3$, $U_{sta}=1-1/3=0.667$. $R = (1-0.667)/(1-0.333) = 0.5$. $U_{sta}=0.667 < \tau_{sta}=0.7$, so invalid.

With no valid move available, the protocol stops immediately and simply returns the starting decomposition.  The exercise captures the essential logic of the negotiation: concessions are allowed only when they stay above the agreed‑upon floor, and the concession ratio gives a precise, numeric measure of the trade‑off.

\subsection{Bounding the Effect of a Single Move}
It is useful to bound how much a single module reassignment can change the two utilities, as this underpins the protocol’s convergence and rationality.

\textbf{Stability utility bound.} A single move changes the assignment of exactly one module. The MoJo distance may change by at most $1$ (if the move is a pure Move operation). Therefore $|\Delta U_{sta}| \le \frac{1}{n_{common}}$.

\textbf{Cohesion utility bound.} TurboMQ is the sum of $k$ cluster factors, each between $0$ and $1$. A single move affects at most two clusters (the source and destination). The change in any cluster factor is bounded by $1$. Hence $|\Delta \mathrm{TurboMQ}| \le 2$. Since $U_{coh} = \mathrm{TurboMQ}/k$, the maximum change is $|\Delta U_{coh}| \le \frac{2}{k}$.

These bounds confirm that the protocol proceeds in small steps, and the concession ratio $R(\delta)$ is always well‑defined (the denominator is non‑zero). They also support the finite termination argument, because the total increase in $U_{coh}$ is bounded, and each step provides a non‑trivial increment.

\subsection{Formal Properties}

\begin{lemma}[Finite Termination]
\label{lem:termination}
The AMCP terminates in a finite number of steps bounded by $n\cdot(k_{max}-1)$, where $n=|V|$ and $k_{max}$ is the number of clusters in the starting partition $D_{prev}$.
\end{lemma}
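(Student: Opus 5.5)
The plan is a potential-function argument with $U_{coh}$ as the potential, followed by a separate counting argument for the explicit bound $n\cdot(k_{max}-1)$.

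\textbf{Step 1: finiteness.} Every iteration of the while-loop in Algorithm~\ref{alg:amcp} either executes \textbf{break} or replaces $D$ by some $D^*$ with $U_{coh}(D^*) > U_{coh}(D)$. This is forced by the admissibility test $u'_{coh} > u_{coh}$. The sequence of accepted partitions $D_0 = D_{prev}, D_1, D_2, \dots$ therefore has strictly increasing cohesion utility, so no partition can occur twice.

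Every accepted partition is a labelling of the $n$ modules with labels in $\{1,\dots,k_{max}\}$, since single-module reassignments only use existing labels. Hence there are at most $k_{max}^{\,n}$ candidates, and the loop halts after finitely many iterations. Halting happens through the guard $u_{coh} \ge \tau_{coh}$, through $\mathcal{V}=\emptyset$, or through exhaustion. Each iteration is itself finite, because $|\mathcal{M}| \le n(k_{max}-1)$, as noted in the complexity subsection.

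\textbf{Step 2: the explicit bound.} To sharpen this to $n\cdot(k_{max}-1)$ iterations, I would use a charging argument. Each accepted step is charged to the pair $(m, c)$ of the moved module and its destination cluster, and there are exactly $n(k_{max}-1)$ such pairs relative to any fixed current label. The goal is to show that no pair is charged twice along the negotiation path.

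The first thing I would try is to combine the strict monotonicity of $U_{coh}$ with the greedy minimal-ratio choice. The hope is that once $m$ has been moved into $c$, moving it back out and later into $c$ again would require a cohesion-decreasing step somewhere in between, which the protocol forbids. To make this precise I would write $\Delta U_{coh}$ for a single move in terms of the edges between $m$ and the source and destination clusters, using the locality observation from the single-move bounds: only two cluster factors change.

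\textbf{Main obstacle.} Step 2 is the delicate part. TurboMQ is not additive over modules, so a module can, in principle, be moved back and forth as the rest of the partition changes around it, and then the charging map is not injective. Nor does the stability budget obviously help. Steps with $U_{sta}(D') \ge U_{sta}(D)$ have non-positive concession ratio and consume none of the budget $1-\tau_{sta}$. Steps that do decrease $U_{sta}$ lose at least $1/n_{common}$ each, because MoJo is integer-valued. This yields a secondary bound of at most $(1-\tau_{sta})\,n_{common}$ stability-reducing steps, which is $\le n$.

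If the injectivity argument fails, I would restate the lemma with a bound that does hold: finiteness, plus the stability-budget count above for the number of concession steps. The stability-neutral steps would then be controlled by strict monotonicity of $U_{coh}$ over the finite partition space.
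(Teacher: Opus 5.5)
Your Step~1 is sound and coincides with the paper's Steps~1--2. The acceptance test forces $U_{coh}$ to increase strictly, so no partition recurs. Since moves only target existing labels, the accepted states lie among at most $k_{max}^{\,n}$ labellings, which gives finiteness.

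The genuine gap is that the lemma asserts the explicit bound $n\cdot(k_{max}-1)$, and your proposal never proves it. Step~2 is left as a hope, and the fallback changes the statement. Two other parts of your plan also fail as written.

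\emph{The charging count is off.} $n(k_{max}-1)$ is the number of (module, destination) pairs available from one fixed partition. Over the whole run every label can serve as a destination, so an injective charge to pairs $(m,c)$ yields only $n\,k_{max}$: a module can go $1\to 2\to 1$ without repeating a pair. What you would actually need is the stronger invariant that no module ever re-enters a cluster it has occupied before, which caps each module at $k_{max}-1$ moves. As you note, the non-additivity of TurboMQ gives no reason to expect this.

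\emph{The stability-budget count is invalid}, because $U_{sta}$ is not monotone under single moves. Take $D_{prev}=\{ABC\mid DEF\}$ and move $A$, then $B$, then $C$ into the second cluster; the MoJo values are $1,2,1$. Decreases can therefore be offset by increases, so the number of stability-reducing steps is not bounded by $(1-\tau_{sta})\,n_{common}$.

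That said, the paper does not supply the missing idea either. Its Step~3 observes that at most $n(k_{max}-1)$ partitions are reachable by one reassignment from a given partition. It then treats the whole accepted sequence as living in ``this finite set''. But each accepted state is only a neighbour of its predecessor, not of one fixed partition. The only a priori constraint on the sequence is that it stays within the full set of labellings, and strict monotonicity of $U_{coh}$ yields just the exponential bound you derived.

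Your suspicion about Step~2 is therefore well founded. What both arguments actually establish is finite termination, with a bound of the order of the number of partitions of $V$ into at most $k_{max}$ blocks. The linear bound would require the re-entry invariant above, which neither proof provides.
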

\begin{proof}
We use $U_{coh}$ as a strict Lyapunov function for the accepted move sequence.

\noindent\textit{Step 1 — Strict monotonicity.} The validity filter in Algorithm~\ref{alg:amcp} requires $U_{coh}(D')>U_{coh}(D)$ for any accepted move $\delta$. Therefore every accepted state has a strictly higher $U_{coh}$ value than its predecessor.

\noindent\textit{Step 2 — No cycling.} Suppose, for contradiction, that the protocol visits the same partition $D$ at two steps $t_1 < t_2$. By Step~1, $U_{coh}(D)$ must be strictly greater at step $t_2$ than at step $t_1$, which is impossible since both refer to the identical partition. Hence, no partition is ever revisited.

\noindent\textit{Step 3 — Finite upper bound.} The number of distinct partitions reachable by single‑module reassignments from any partition of $n$ modules into at most $k_{max}$ clusters is at most $n\cdot(k_{max}-1)$. Since accepted states form a strictly increasing sequence of $U_{coh}$ values over this finite set, the sequence has length at most $n\cdot(k_{max}-1)$.

Hence, the AMCP terminates in at most $n\cdot(k_{max}-1)$ accepted moves, plus one final evaluation of an empty valid‑move set.
\end{proof}

\begin{lemma}[Monotonic Concession Rationality]
\label{lem:rationality}
In a negotiation where the set of valid moves $\mathcal{V}$ is monotonically non‑increasing (i.e., no move $\delta$ becomes valid at step $t+1$ with a lower ratio than was available at step $t$), the sequence of concession ratios $R_t$ is non‑decreasing. This ensures the agent follows the \textbf{Zeuthen Strategy} by exhausting cheap concessions first.
\end{lemma}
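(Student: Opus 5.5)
The plan is to argue by induction on the round index $t$, comparing the greedy choice at round $t$ with the candidate set available at round $t+1$. Let $\mathcal{V}_t$ be the valid-move set computed in round $t$. Let $R_t = \min_{\delta\in\mathcal{V}_t} R_t(\delta)$ be the ratio of the move actually accepted, where $R_t(\delta)$ is the Concession Ratio~\eqref{eq:ratio} evaluated relative to the current state $D_t$. The whole argument rests on reading the hypothesis of the lemma as a precise inequality:
\[
\min_{\delta\in\mathcal{V}_{t+1}} R_{t+1}(\delta) \;\ge\; \min_{\delta\in\mathcal{V}_t} R_t(\delta).
\]
In words, every move valid at step $t+1$ either was already available at step $t$ with a ratio at least as large, or is new but has ratio no lower than what step $t$ offered. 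With this formalisation, the claim $R_{t+1}\ge R_t$ follows directly from the selection rule in Algorithm~\ref{alg:amcp}: the protocol always picks the argmin over $\mathcal{V}$.

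The key steps, in order, are these.
\begin{enumerate}
\item Fix the notation above and record that $R_t$ is well defined. The validity filter forces $U_{coh}(D+\delta)>U_{coh}(D)$, so the denominator of~\eqref{eq:ratio} is strictly positive. Lemma~\ref{lem:termination} guarantees the sequence $(R_t)$ is finite.
\item Split the moves in $\mathcal{V}_{t+1}$ into two classes: those also in $\mathcal{V}_t$, and those newly enabled. The hypothesis handles the new moves directly, since none of them has a ratio below $R_t$.
\item For moves surviving from round $t$, I would first try to show that their ratio at $t+1$ is at least their ratio at $t$, and hence at least $R_t$ by minimality of the step-$t$ choice. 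Here I would use the single-move bounds: $|\Delta U_{sta}|\le 1/n_{common}$ and $|\Delta U_{coh}|\le 2/k$.
\item Chain the inequalities $R_t\le R_{t+1}\le\cdots$ to obtain a non-decreasing sequence. Then interpret this as the Zeuthen behaviour: cheapest concessions are exhausted first, and larger sacrifices occur only when no cheaper one remains.
\end{enumerate}

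The main obstacle is step 3. The ratio of a fixed reassignment $\delta$ is not invariant once the base partition moves from $D_t$ to $D_{t+1}$. Both the numerator (a MoJo difference) and the denominator (a TurboMQ difference, possibly with a changed $k$) are recomputed, and nothing in the utility definitions forces the ratio to grow. I do not expect to derive this from the bounds alone. Instead, I would fold it into the lemma's hypothesis, "no move becomes valid at $t+1$ with a lower ratio than was available at $t$", and read that hypothesis as covering all moves in $\mathcal{V}_{t+1}$, recomputed ratios included.

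This is the "closed-instance condition" mentioned in the abstract. Under it, step 3 collapses to the same one-line argument as step 2, and the proof reduces to the greedy-minimum observation. I would state explicitly that without this assumption the monotonicity can fail, and that the lemma is conditional precisely for this reason.
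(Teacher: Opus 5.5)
Your argument is correct and is essentially the paper's own. Both reduce to the greedy-minimum observation: $R_{t+1}$ cannot fall below $R_t$ when every candidate in $\mathcal{V}_{t+1}$ is, by hypothesis, no cheaper than the step-$t$ minimum. The one difference is that the paper writes the hypothesis as $\mathcal{V}_{t+1}\subseteq\mathcal{V}_t\setminus\{\delta^*_t\}$ and tacitly treats $R(\delta)$ as independent of the base partition, whereas you note that surviving moves have their ratios recomputed against $D_{t+1}$ and make the hypothesis cover those recomputed ratios, which is the more careful reading of the same proof.
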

\begin{proof}
Let $\mathcal{V}_t$ be the set of valid movements in step $t$. By definition, $R_t = \min_{\delta \in \mathcal{V}_t} R(\delta)$. If $\mathcal{V}_{t+1} \subseteq \mathcal{V}_t \setminus \{\delta^*_t\}$, then $\min_{\delta \in \mathcal{V}_{t+1}} R(\delta) \ge \min_{\delta \in \mathcal{V}_t} R(\delta)$, which implies $R_{t+1} \ge R_t$. Although structural dependencies in software graphs may occasionally reveal a new efficient move ($R \approx 0$) after a reassignment, the agent remains rational by always selecting the globally minimum concession available at each discrete step.
\end{proof}

\begin{lemma}[Pareto-Satisfactoriness]
\label{lem:pareto}
At termination, the agreed partition $D^*$ is Pareto‑satisfactory within the single‑move neighborhood: there exists no single‑module move $\delta$ that strictly improves $U_{coh}$ while keeping $U_{sta} \ge \tau_{sta}$.
\end{lemma}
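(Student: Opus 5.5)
The plan is to read the claim directly off the exit conditions of Algorithm~\ref{alg:amcp}, using Lemma~\ref{lem:termination} to know that an exit actually happens. By Lemma~\ref{lem:termination}, the protocol stops after finitely many accepted moves. The loop can be left in only two ways: \emph{(i)} the guard $u_{coh} < \tau_{coh}$ fails, or \emph{(ii)} the \textbf{break} fires because $\mathcal{V}=\emptyset$. I will treat these two exits separately.

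\emph{Case (ii).} This is the main case and should be almost definitional. When the break fires, $\mathcal{M}$ was the full set of single-module reassignments of the current $D$. The filter put into $\mathcal{V}$ exactly those $\delta$ with $U_{coh}(D+\delta) > U_{coh}(D)$ and $U_{sta}(D+\delta) \ge \tau_{sta}$. So $\mathcal{V}=\emptyset$ says that no single move satisfies both conditions, which is the claim, with $D^*=D$.

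Two bookkeeping invariants are needed here.
\begin{itemize}
\item The stored values $u_{coh},u_{sta}$ must equal $U_{coh}(D),U_{sta}(D)$. This holds by induction over the updates in the loop.
\item $D^*$ should itself satisfy $U_{sta}(D^*)\ge\tau_{sta}$ whenever at least one move was accepted, since every accepted state passed the filter. If no move was accepted, then $D^*=D_{prev}$ and $U_{sta}=1$.
\end{itemize}

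\emph{Case (i).} This is the main obstacle. If the loop exits because $U_{coh}(D)\ge\tau_{coh}$, the algorithm never checked the neighborhood, and an improving stability-feasible move may still exist. The statement as written cannot hold unconditionally here. My first step would be to make the lemma precise by restricting it to the deadlock exit, i.e.\ to runs in which $\tau_{coh}$ is not reached. This is presumably the intended reading, since the surrounding text describes termination via deadlock. The alternative is to observe that when $\tau_{coh}$ is reached, the cohesion agent's aspiration is met and the outcome is satisfactory in the thresholded sense: both $U_{coh}\ge\tau_{coh}$ and $U_{sta}\ge\tau_{sta}$ hold. A second option is to set $\tau_{coh}=1$, which makes case (i) coincide with global cohesion optimality, so that no strict improvement is possible anyway. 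I would state this case split explicitly rather than hide it.

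Finally, I would remark that the result gives only \emph{local} Pareto-satisfactoriness. A two-move sequence might still improve $U_{coh}$ while staying above $\tau_{sta}$. The neighborhood qualifier in the statement is therefore essential, and no global claim is attempted.
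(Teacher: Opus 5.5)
Your treatment of the deadlock exit is the paper's proof. A move that strictly raises $U_{coh}$ while keeping $U_{sta}\ge\tau_{sta}$ would pass the filter, so $\mathcal{V}\neq\emptyset$ and the \textbf{break} could not have fired.

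Where you go beyond the paper is the case split, and on that point you are right and the paper is not. Its proof says such a move means the protocol ``would not have been terminated,'' which tacitly assumes the \textbf{break} is the only way out of the loop. For the guard exit the statement as written is false. Take the paper's own 3-module example with $\tau_{coh}=0.3$ and $\tau_{sta}=0.6$:
\begin{itemize}
\item The initial cohesion ($1/3$ by the paper's computation) already meets $\tau_{coh}$, so the loop body never runs and $D^*=D_{prev}$.
\item Yet the move $C\to 1$ gives $U_{coh}=1>1/3$ and $U_{sta}=2/3\ge 0.6$.
\end{itemize}
So your restriction is necessary, not cosmetic.

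Two of your three repairs genuinely rescue the claim:
\begin{itemize}
\item Limiting it to runs that end in deadlock.
\item Taking $\tau_{coh}=1$ (or larger). This works because each TurboMQ cluster factor lies in $[0,1]$, so $U_{coh}\le 1$, and a guard exit forces $U_{coh}(D^*)=1$.
\end{itemize}
The ``thresholded satisfaction'' reading does not prove the stated claim; it replaces it with a different one.

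Your invariant that the stored $u_{coh},u_{sta}$ equal the true utilities of the current $D$ is a step the paper leaves implicit, and it is needed because the filter compares against stored values. Finally, you only use finiteness from Lemma~1, not its questionable $n(k_{max}-1)$ step count. Finiteness holds independently: $U_{coh}$ strictly increases, so none of the finitely many partitions of $V$ can repeat. In any case, the statement already presupposes termination.
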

\begin{proof}
If such a $\delta$ existed, it would satisfy the validity conditions in Algorithm~\ref{alg:amcp}. The protocol would not have been terminated, contradicting the termination condition. Note that this is a local equilibrium; $D^*$ may be globally Pareto‑dominated by partitions that can be reached through multi‑move sequences.
\end{proof}

\section{Representative Experiment: Real‑World Xwork}
\label{sec:results}

\subsection{Dataset and Setup}
We provide preliminary evidence on the Xwork command‑pattern framework (OpenSymphony). Official bytecode JARs of Xwork 1.0 and 1.1 were obtained from Maven Central, and class‑level dependencies were extracted via a custom constant‑pool parser. Version 1.0 contains 113 classes in 10 packages; version 1.1 contains 156 classes, with a common set of 113 classes. The original package structure of version 1.0 serves as $D_{prev}$. Baselines include Bunch (greedy hill‑climbing maximizing TurboMQ, ignoring stability) and CC/G~\cite{Ibrahim2014}. All negotiation runs fixed $\tau_{coh}=0.5$ and varied $\tau_{sta} \in \{0.6,0.7,0.8,0.85,0.9,0.95\}$. Utilities $U_{coh}$ and $U_{sta}$, Social Welfare $SW=U_{coh}+U_{sta}$, and the number of negotiation steps are reported.

\subsection{Results and Discussion}
Table~\ref{tab:sens_xwork} shows the sensitivity sweep. The stability budget is strictly enforced: in the binding regime ($\tau_{sta}=0.95$), the protocol accepts only three moves and preserves $U_{sta}=0.9583$, far above the unconstrained Bunch optimum. The results are identical for all loose budgets ($\tau_{sta} \le 0.90$): this “flatness” is a feature of well‑structured architectures. The Cohesion Agent exhausts all single‑module moves that improve structural quality before the stability metric ever drops below $0.9167$, so any budget looser than $0.9167$ is non‑binding. When the budget is tightened to a binding constraint ($\tau_{sta}=0.95$), the protocol violently truncates the search at step 3, actively enforcing the architect's hard stability limit.

\begin{table}[t]
\centering
\caption{Sensitivity sweep on the real Xwork graph (113 classes).}
\label{tab:sens_xwork}
\begin{tabular}{c|r|r|c|c|c}
\toprule
$\tau_{sta}$ & $U_{coh}$ & $U_{sta}$ & SW & Steps & Diverge? \\
\midrule
0.60 & 0.5980 & 0.9167 & 1.5146 & 6 & No \\
0.70 & 0.5980 & 0.9167 & 1.5146 & 6 & No \\
0.80 & 0.5980 & 0.9167 & 1.5146 & 6 & No \\
0.85 & 0.5980 & 0.9167 & 1.5146 & 6 & No \\
0.90 & 0.5980 & 0.9167 & 1.5146 & 6 & No \\
0.95 & 0.5919 & 0.9583 & 1.5502 & 3 & Yes \\
\midrule
Bunch & 0.5979 & 0.9167 & 1.5146 & -- & -- \\
CC/G  & 0.5885 & 0.9167 & 1.5052 & -- & -- \\
\bottomrule
\end{tabular}
\end{table}

The concession ratio trace (Fig.~\ref{fig:concession_xwork}) rises monotonically, confirming the rational concession behavior predicted by Lemma~2.

\begin{figure}[t]
\centering
\begin{tikzpicture}
\begin{axis}[
    width=0.75\columnwidth, height=5cm,
    xlabel={Negotiation Step}, ylabel={Concession Ratio $\rho$},
    xmin=1, xmax=10,
    legend pos=north west,
    grid=major,
    title={Concession Ratio Evolution — Real Xwork (113 classes)}
]
\addplot[blue, thick, mark=*] coordinates { (1,0.365)(2,0.454)(3,0.809)(4,0.811)(5,0.868)(6,1.079)(7,1.174)(8,2.232)(9,2.750) };
\addlegendentry{$\tau_{sta}=0.6$};
\addplot[red, thick, mark=triangle*] coordinates { (1,0.365)(2,0.454)(3,0.809)(4,0.811)(5,0.868) };
\addlegendentry{$\tau_{sta}=0.95$};
\end{axis}
\end{tikzpicture}
\caption{Concession ratio vs.\ step for the real Xwork graph.}
\label{fig:concession_xwork}
\end{figure}

\section{Preliminary Findings and Future Work}
\label{sec:preliminary}

We have run the same protocol on an additional synthetic benchmark, a simulated Acegi proxy, and several further real‑world Java systems (Apache Ant, JFreeChart, JUnit, Apache Tomcat Catalina, Lucene Core, Log4j, and a dense core of Apache Spark), as well as a cross‑language replication on Flask (Python). In every case, the AMCP successfully enforces the stability budget, with behavior varying predictably based on graph density and architectural erosion. A detailed account of these experiments, together with a comparison against multi‑objective evolutionary baselines (NSGA‑II) and multi‑version staged negotiation chains, will be presented in the full paper~\cite{Ibrahim2025full}.

Future work will address two key directions. First, we intend to replace exhaustive move enumeration with a ``Navigator'' heuristic that prioritizes proposals based on localized dependency density, much as a human maintainer focuses on highly coupled problem areas. Second, we plan to extend the action space to include cluster‑level merges and splits, allowing the agents to escape local equilibria while still respecting deterministic stability budgets. Finally, deploying the protocol on parallel distributed architectures would make it applicable to ultra‑large industrial codebases.

\section{Conclusion}
\label{sec:conclusion}
We have reformulated software module clustering as a distributed consensus problem and introduced the AMCP negotiation protocol, which enables explicit trade‑offs between cohesion and stability. Formal proofs guarantee termination, rational concession, and local Pareto‑satisfactoriness. Preliminary experiments on a real‑world Java system demonstrate the protocol's ability to enforce a user‑specified stability budget, a capability absent in traditional tools. The full experimental study will appear in a forthcoming paper.
\balance
{\small
\bibliographystyle{IEEEtran}
\bibliography{refs}}

\end{document}